\definecolor{webgreen}{rgb}{0,.5,0}
\definecolor{webbrown}{rgb}{.6,0,0}
\DeclareMathOperator{\nuc}{nuc}
\DeclareMathOperator{\mnuc}{mnuc}
\DeclareMathOperator{\crep}{crep}
\DeclareMathOperator{\facge}{facge2}
\DeclareMathOperator{\circsf}{circsf}
\DeclareMathOperator{\sqoh}{sq021}
\DeclareMathOperator{\sqfoh}{sqfree021}
\DeclareMathOperator{\sqtwo}{sq2120}
\DeclareMathOperator{\sqftwo}{sqfree2120}
\DeclareMathOperator{\currie}{currie}
\DeclareMathOperator{\testone}{test021}
\DeclareMathOperator{\testtwo}{test2120}
\DeclareMathOperator{\isBorder}{isBorder}
\DeclareMathOperator{\isBorderCone}{isBorderC1}
\DeclareMathOperator{\isBorderCtwo}{isBorderC2}
\DeclareMathOperator{\isBorderCthree}{isBorderC3}
\DeclareMathOperator{\isBordered}{isBordered}
\DeclareMathOperator{\isAlternatingzero}{isAlternating0}
\DeclareMathOperator{\hasMNUCO}{hasMNUCO}
\DeclareMathOperator{\hasMNUCE}{hasMNUCE}
\DeclareMathOperator{\isAlternatingE}{isAlternatingE}
\begin{document}

\theoremstyle{plain}
\newtheorem{theorem}{Theorem}
\newtheorem{corollary}[theorem]{Corollary}
\newtheorem{lemma}[theorem]{Lemma}
\newtheorem{proposition}[theorem]{Proposition}

\theoremstyle{definition}
\newtheorem{definition}[theorem]{Definition}
\newtheorem{example}[theorem]{Example}
\newtheorem{conjecture}[theorem]{Conjecture}

\theoremstyle{remark}
\newtheorem{remark}[theorem]{Remark}

\author{
Trevor Clokie, Daniel Gabric, and 
Jeffrey Shallit\\
School of Computer Science\\
University of Waterloo\\
Waterloo, Ontario N2L 3G1\\
Canada\\
\href{mailto:trevor.clokie@uwaterloo.ca}{\tt trevor.clokie@uwaterloo.ca}\\ 
\href{mailto:dgabric@uwaterloo.ca}{\tt dgabric@uwaterloo.ca}\\
\href{mailto:shallit@uwaterloo.ca}{\tt shallit@uwaterloo.ca}
}

\title{Circularly squarefree words and unbordered conjugates:  a new approach}

\maketitle

\begin{abstract}
Using a new approach based on automatic sequences, logic, and a decision procedure,
we reprove some old theorems about circularly squarefree words and unbordered conjugates in a new and simpler way.
Furthermore, we prove three new results about unbordered conjugates:  we complete the classification, due to Harju and Nowotka, of binary words with the maximum number of unbordered conjugates; we prove that for every possible number, up to the maximum, there exists a word having that number of unbordered conjugates, and finally, we determine the expected number of unbordered conjugates in a random word.
\end{abstract}

\section{Introduction}
Throughout this paper, $\Sigma_k$ denotes
the alphabet $\{ 0, 1, \ldots, k-1 \}$.

Two words are said to be {\it conjugate\/} if one is a cyclic shift of the other, as in the English words {\tt enlist} and {\tt listen}.

A word $w$ has a {\it border} $x$ if $x \not\in \lbrace \epsilon, w\rbrace$ and $x$ is both a prefix and suffix of $w$; the two occurrences
of $x$ are allowed to overlap each other.
For example,
{\tt alfa} is a border of {\tt alfalfa}.
A word $w$ is said to be {\it bordered\/} if it has a border, and otherwise, it is {\it unbordered}.  It follows immediately from the Lyndon-Sch\"utzenberger theorem 
\cite{Lyndon&Schutzenberger:1962} that a word
$w$ if bordered iff it has a border of length
$\leq |w|/2$; then the two shorter borders cannot overlap each other.  For example, {\tt alfalfa} is also bordered by {\tt a}.

A word $w$ is said to be a {\it square\/} if $w = xx$ for
some nonempty word $x$.  An example in French is the word {\tt couscous}.   A word is 
{\it squarefree\/} if no nonempty factor is a square. 
Let $\mu$ be the {\it Thue-Morse morphism}, defined by
$\mu(0) = 01$ and $\mu(1) = 10$.   The Thue-Morse word
$\bf t = {\tt 01101001} \cdots$ is the fixed point, starting with $0$, of $\mu$.   Thue \cite{Thue:1906,Thue:1912,Berstel:1995} proved that there exist infinite squarefree words over a three-letter alphabet; also see \cite{Allouche&Shallit:1999}.   A famous example of such a word can be obtained from the Thue-Morse word as follows:  count the number of $1$'s between two consecutive $0$'s in $\bf t$.  This gives the so-called {\it ternary Thue-Morse word} 
$$ {\bf c} = 210201 \cdots,$$
and is squarefree.   An alternative description of 
$\bf c$ is as follows:  it is the image, under $\tau$
of the fixed point of the morphism $\varphi$ defined
below:
\begin{align*}
\varphi(0) &= 01 \quad & \tau(0) &= 2 \\
\varphi(1) &= 20 \quad & \tau(1) &= 1 \\
\varphi(2) &= 23 \quad &\tau(2) &= 0 \\
\varphi(3) &= 02 \quad & \tau(3) &= 1
\end{align*}

A word $w$ is {\it circularly squarefree\/} if every one of its conjugates is squarefree.
For example, {\tt outshout} is squarefree, but not circularly squarefree.  Clearly we have

\begin{proposition}
A word is circularly squarefree iff all its conjugates are unbordered.
\end{proposition}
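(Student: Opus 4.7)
The plan is to prove both implications by a single observation: a square factor in some conjugate can be rotated into a border, and conversely, a border in some conjugate can be rotated into a square factor.

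For the forward direction, suppose $w$ is circularly squarefree, and assume for contradiction that some conjugate $w'$ is bordered. By the consequence of the Lyndon--Sch\"utzenberger theorem mentioned in the introduction, $w'$ then has a border $x$ of length at most $|w'|/2$, so I can write $w' = xyx$ for some (possibly empty) word $y$. Now rotate $w'$ to the left by $|x|+|y|$ positions; this yields the conjugate $x \cdot xy = xxy$, which contains the square $xx$. This conjugate is itself a conjugate of $w$, contradicting circular squarefreeness.

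For the backward direction, suppose every conjugate of $w$ is unbordered, and assume for contradiction that some conjugate $w'$ contains a square. Write $w' = \alpha xx \beta$ with $x$ nonempty. Rotating $w'$ to the left by $|\alpha|+|x|$ positions yields the conjugate $x \beta \alpha x$, which has $x$ as both a prefix and a suffix. Since $|x| \leq |w'|/2 < |w'|$ and $x$ is nonempty, this $x$ is a genuine border, so the conjugate is bordered, contradicting the hypothesis.

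Since the argument only shuffles cyclic shifts, there is no real obstacle here; the one small thing to double-check is the bookkeeping showing that the required rotation amounts lie in the valid range and that the resulting word has $x$ as a proper (nonempty, not full-length) border, which is immediate from $x$ being nonempty and $|x| \leq |w'|/2$.
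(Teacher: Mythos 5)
Your proof is correct, and it is precisely the standard rotation argument the paper has in mind: the paper states this proposition with no proof at all (prefacing it only with ``Clearly we have''), so your write-up simply supplies the omitted details. Both directions check out, including the use of the short border $|x|\leq |w'|/2$ to get the non-overlapping factorization $w'=xyx$ and the verification that $x$ is a proper, nonempty border of $x\beta\alpha x$.
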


We now turn to a description of what we do in this paper.
Using a complicated case-based argument, Currie \cite{Currie:2002b} proved that there exist circularly squarefree ternary words of every length $n$, except for $\{ 5, 7, 9, 10, 14, 17 \}$.   The first of our main results is a new proof of Currie's theorem, based on the following result:

\begin{theorem}
For all natural numbers $n > 3$, except 
$5, 7, 9, 10, 14, 17, 21$, and $28$, there exists
a factor $x = x(n)$ of the ternary Thue-Morse word $\bf c$
that is either
\begin{itemize}
    \item[(a)] of length $n-3$, and $x021$ is circularly squarefree;
    \item[(b)] of length $n-4$, and $x2120$ is circularly squarefree.
\end{itemize}
\label{currie}
\end{theorem}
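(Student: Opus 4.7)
My plan is to follow the automatic-sequences and decision-procedure approach advertised in the abstract. Because the morphism $\varphi$ is $2$-uniform, its infinite fixed point is $2$-automatic; composing with the coding $\tau$ preserves this, so the predicate ``$\mathbf{c}[i] = a$'' for $a \in \{0,1,2\}$ is definable in the first-order theory of $\langle \mathbb{N}, +, V_2 \rangle$ enriched with $\mathbf{c}$, and hence decidable by a tool such as Walnut.

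With that in hand, the task reduces to encoding the theorem as a first-order sentence with a free variable $n$. For a candidate starting position $i$ and length $n$, let $y^{(a)}(n,i)$ denote the word of length $n$ whose letters at positions $0, 1, \ldots, n-4$ are $\mathbf{c}[i], \mathbf{c}[i{+}1], \ldots, \mathbf{c}[i{+}n{-}4]$ and whose last three letters are $0, 2, 1$. I would introduce a subformula $L(n,i,j,a)$ asserting that position $j$ of $y^{(a)}(n,i)$ equals $a$, by case-splitting on whether $j \le n-4$ (reducing to $\mathbf{c}[i+j] = a$) or $j \in \{n-3, n-2, n-1\}$ (determined by the suffix). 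Circular squarefreeness of $y^{(a)}(n,i)$ then becomes
\[
\forall p,s \;\bigl( (p < n \wedge 1 \le s \le n/2) \,\Longrightarrow\, \exists k\; (k < s \wedge y^{(a)}[(p{+}k) \bmod n] \ne y^{(a)}[(p{+}s{+}k) \bmod n]) \bigr).
\]
Since $p < n$, $s \le n/2$, and $k < s$, every index here stays strictly below $2n$, so each ``$\bmod n$'' dissolves into a single case-split on whether the sum is $< n$ or $\ge n$. Let $\Phi_a(n)$ be the existential closure over $i$ of this formula, and define $\Phi_b(n)$ analogously with suffix $2120$ and length $n-4$. The theorem asserts exactly that $\{n > 3 : \neg(\Phi_a(n) \vee \Phi_b(n))\} = \{5,7,9,10,14,17,21,28\}$, and this single statement is decidable: feed it to the procedure and read off the finite set of bad $n$.

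The one serious obstacle is practical rather than conceptual. The formula mixes a free length parameter $n$ with cyclic index arithmetic, and a naive encoding can blow up the intermediate automata badly. I would mitigate this by splitting the inner square check into a handful of positional cases according to where $p$, $p+s$, and $p+2s$ sit relative to the boundary between the $\mathbf{c}$-portion and the appended suffix (which has length only $3$ or $4$); in each case the letter comparison reduces to an ordinary, non-cyclic factor equality inside $\mathbf{c}$ with constant offsets from $\{0,1,2,3\}$. As a backstop, small values $n \le N_0$ can be checked by brute force and the decision procedure only invoked for $n > N_0$. Once the procedure terminates on this optimized formula, the theorem follows with no further combinatorial argument.
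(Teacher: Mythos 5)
Your proposal is correct and follows essentially the same route as the paper: encode the letters of $x021$ (resp.\ $x2120$) by a positional case-split between the $\mathbf{c}$-portion and the constant suffix, express circular squarefreeness as a first-order predicate over $\langle \mathbb{N},+,V_2\rangle$ with the cyclic indices resolved by comparing sums against $n$, run the decision procedure, and read the finite exceptional set $\{0,1,2,3,5,7,9,10,14,17,21,28\}$ off the resulting automaton. The paper implements exactly this (linearizing the circular word as $x021x02$ and defining $w[j]$ by cases), so no further comment is needed.
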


We now turn to unbordered conjugates.
In two fundamental papers, Harju and Nowotka \cite{Harju&Nowotka:2004,Harju&Nowotka:2008} studied the unbordered conjugates of a word.
In particular, letting $\nuc(w)$ denote the number of unbordered conjugates of $w$,
and $\mnuc_k (n)$ denote the maximum number of unbordered conjugates of a length-$n$ word over a $k$-letter alphabet, they proved that
\begin{itemize}
    \item[(a)] 
for binary words $w$ of length $n \geq 4$ we have $\nuc(w) \leq n/2$;

\item[(b)] for $n>2$ even, there exists a binary word of length $n$ having $n/2$ unbordered conjugates iff 
$n = 2^k$ or $n = 3 \cdot 2^k$ for some $k \geq 1$.

\end{itemize}

In other words, they explicitly computed
$\mnuc_2 (n)$ for all even $n$ and bounded it above for odd $n$.  We complete the understanding of $\mnuc_2(n)$ by proving that $\mnuc_2(n) = \lfloor n/2 \rfloor$ for all
odd $n > 3$.
Our strategy is to show that the maximum of $\nuc(w)$, over all words of length $n$, is actually achieved by a factor of the Thue-Morse word.   

More precisely, we prove
\begin{theorem}\label{theorem:maxUn}
For all $n \geq 1$, there exists a length-$n$ factor $w$ of the Thue-Morse word $\bf t$ with
$\nuc(w) = \mnuc_2 (n)$.  Furthermore, such a factor is guaranteed to occur starting at a position $\leq n$ in $\bf t$.
\end{theorem}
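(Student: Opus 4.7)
Since the paper's methodology combines automatic sequences, logic, and a decision procedure, my plan is to reduce the claim about Thue-Morse factors to a first-order question answerable by Walnut, and then to compare the answer against $\mnuc_2(n)$.

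First I would express $\nuc(\mathbf{t}[i..i+n-1])$ in the first-order logic of $\mathbf{t}$. Writing the $j$-th cyclic conjugate of $\mathbf{t}[i..i+n-1]$ as $\mathbf{t}[i+j..i+n-1]\,\mathbf{t}[i..i+j-1]$, the predicate ``this conjugate has a border of length $\ell$'' asserts that for every $m \in [0,\ell-1]$, the letter $\mathbf{t}[i+((j+m) \bmod n)]$ equals $\mathbf{t}[i+((j+n-\ell+m) \bmod n)]$; this is expressible in Presburger arithmetic enriched by the Thue-Morse automaton once one splits on whether each of the two length-$\ell$ windows crosses the wrap-around point. By the Lyndon--Sch\"utzenberger remark in the excerpt, $\ell$ may be restricted to $[1,\lfloor n/2 \rfloor]$. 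Negating the existential over $\ell$ yields a first-order predicate $U(i,n,j)$ expressing ``the $j$-th conjugate is unbordered''.

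Next, I would use the decision procedure's linear-representation facility to build the regular sequence $N(i,n) = \#\{0 \le j < n : U(i,n,j)\}$, and then the function $g(n) := \max_{0 \le i \le n} N(i,n)$, with the position bound $i \le n$ built directly into the maximization. Finally I would compare $g(n)$ with $\mnuc_2(n)$: the Harju--Nowotka theorem fixes $\mnuc_2(n)$ for even $n$, and the paper's companion result $\mnuc_2(n) = \lfloor n/2 \rfloor$ handles odd $n > 3$. With $\mnuc_2(n)$ in closed form, the equation $g(n) = \mnuc_2(n)$ becomes a decidable first-order statement about $\mathbf{t}$ that Walnut can verify for all $n$ beyond a small threshold, with the remaining finitely many cases dispatched by direct enumeration.

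The main obstacle, I expect, is state-space blowup in the first step: the wrap-around case split together with nested quantification over $j$ and $\ell$ tends to produce intermediate automata that are uncomfortably large, so a careful staging of the construction with aggressive minimization will be required. A secondary difficulty is that the equality $g(n) = \mnuc_2(n)$ is not internal to the Walnut computation of $g(n)$ alone; it requires an externally established value for $\mnuc_2(n)$, and for the even $n$ outside Harju--Nowotka's equality case $\{2^k, 3\cdot 2^k\}$ this value must itself be pinned down, either by a separate combinatorial argument or by brute-force enumeration over binary words of moderate length.
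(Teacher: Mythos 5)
Your first step---building a first-order border predicate for the $j$-th conjugate of $\mathbf{t}[i..i{+}n{-}1]$ with a case split on whether each length-$\ell$ window wraps around, restricting $\ell \le \lfloor n/2\rfloor$ by Lyndon--Sch\"utzenberger, and negating to get ``unbordered''---is exactly what the paper does (its predicates $\isBorderCone$, $\isBorderCtwo$, $\isBorderCthree$, $\isBordered$). The gap is in your second step. You propose to form the count $N(i,n)=\#\{j : U(i,n,j)\}$ via the linear-representation facility, then take $g(n)=\max_{i\le n}N(i,n)$ and test $g(n)=\mnuc_2(n)$. None of this lives inside the decidable first-order theory: a count of witnesses is not a first-order object, the maximum of a $2$-regular function over a parameter range is not in general $2$-regular and has no supported decision procedure, and the final equality test between that maximum and a closed form is therefore not something Walnut can ``verify for all $n$.'' You cannot simulate the count with quantifiers either, since asserting the existence of $\lfloor n/2\rfloor$ distinct unbordered conjugates would require an unbounded number of existentials.

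The missing idea is the Harju--Nowotka structural theorem that a binary word never has two \emph{consecutive} unbordered conjugates. This both gives the upper bound $\nuc(w)\le\lfloor n/2\rfloor$ and, more importantly, converts ``$w$ attains $\mnuc_2(n)$'' into a purely first-order condition on the border-correlation sequence: for odd $n$ it must be alternating except for exactly one adjacent bordered--bordered pair (pattern $(ub)^{\lfloor n/2\rfloor}b$ up to conjugation), and for even $n$ it must be either fully alternating or alternating with exactly two such exceptional adjacencies. The paper encodes these patterns as the predicates $\isAlternatingzero$, $\isAlternatingE$, $\hasMNUCO$, $\hasMNUCE$, and then the whole claim (including the position bound $i\le n$) is a single decidable sentence---no counting, no maximization, and no external determination of $\mnuc_2(n)$ beyond the Harju--Nowotka bounds. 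Your secondary worry about pinning down $\mnuc_2(n)$ for even $n\notin\{2^k,3\cdot 2^k\}$ is also resolved by the same structural result: $n/2$ is attainable only for those special lengths, so $n/2-1$ is the ceiling elsewhere, and exhibiting the two-exception pattern realizes it. Without replacing your counting step by this pattern characterization, the proof as proposed cannot be carried out.
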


\section{Circularly squarefree ternary words via {\tt Walnut}}

Since the ternary Thue-Morse word $\bf c$ is squarefree, it is reasonable to hope its factors might be a good source of
circularly squarefree words.    Unfortunately, $\bf c$ contains
circularly squarefree words of length $n$ for only about
$1/8$ of all natural numbers $n$, as the following result
shows.

\begin{theorem}
There is a length-$n$ factor of $\bf c$ that is circularly squarefree iff $(n)_2$ is accepted by the automaton in Figure~\ref{fig1}.
\begin{figure}[H]
    \centering
    \includegraphics[width=6.5in]{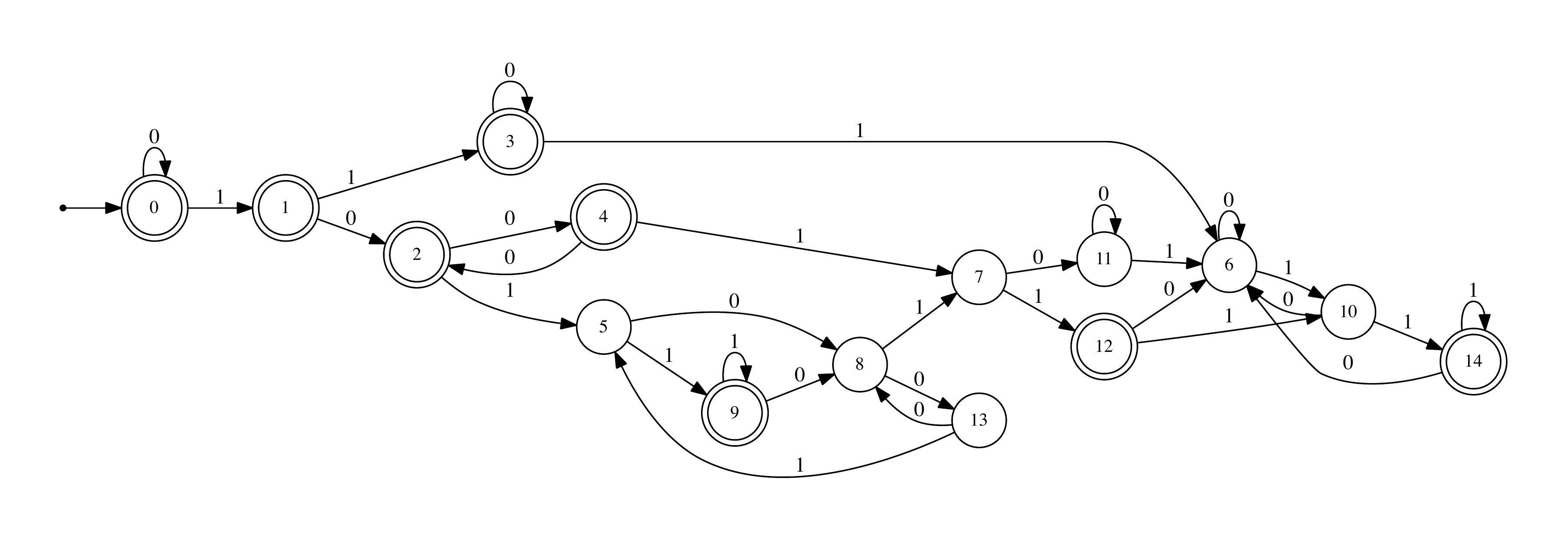}
    \caption{Automaton accepting lengths $(n)_2$ of circularly squarefree words occurring in $\bf c$}
    \label{fig1}
\end{figure}

\end{theorem}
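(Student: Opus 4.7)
The plan is to reduce this theorem to a routine application of the decision procedure \emph{Walnut} for automatic sequences. The key observation is that $\mathbf{c}$ is $2$-automatic: by the explicit description given in the introduction, $\mathbf{c} = \tau(\varphi^\omega(0))$ where $\varphi$ is a $2$-uniform morphism, so by Cobham's theorem there is a deterministic automaton that on input $(n)_2$ outputs $\mathbf{c}[n]$. Since the property ``there exists a length-$n$ factor of $\mathbf{c}$ that is circularly squarefree'' is first-order expressible in the structure $\langle\mathbb{N},+,\mathbf{c}\rangle$, the Büchi--Bruyère theorem guarantees that the set of such $n$ is $2$-automatic, and Walnut can actually construct the corresponding minimal DFA.

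The formulation I would hand to Walnut uses the well-known equivalence: a length-$n$ word $w$ is circularly squarefree iff the doubled word $ww$ contains no square factor of length $\leq n$ starting at a position $<n$. Translated to factors of $\mathbf{c}$: the factor $\mathbf{c}[i\,..\,i+n-1]$ is circularly squarefree iff for every $p\geq 1$ with $2p\leq n$ and every $j$ with $0\leq j<n$, there exists $\ell$ with $0\leq \ell<p$ such that
\[
\mathbf{c}[i+\rho_n(j+\ell)] \neq \mathbf{c}[i+\rho_n(j+p+\ell)],
\]
where $\rho_n(x)=x$ if $x<n$ and $\rho_n(x)=x-n$ otherwise. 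Since $j+p+\ell<2n$, this ``cyclic reduction'' is first-order definable by a simple case split and requires no multiplication of variables, so the whole predicate sits inside the fragment Walnut decides.

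I would then use Walnut's \texttt{def}/\texttt{eval} commands to build up the predicate in stages: first a predicate ``the cyclic factor of length $n$ starting at $i$ has a square of period $p$ at offset $j$'', then quantify out $j$ and $p$ to get ``the cyclic factor of length $n$ starting at $i$ is not squarefree'', then existentially quantify $i$ to obtain the language of valid $n$'s. The resulting minimal DFA is compared for equivalence with the automaton of Figure~\ref{fig1}; equivalence of DFAs is decidable and is itself performed by Walnut.

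The main obstacle is practical rather than conceptual: the unfolded formula has four nested quantifiers on top of the existential $i$, and a naive invocation can produce intermediate automata whose state counts explode before the final minimization. The standard remedy, which I would follow, is to introduce the named intermediate predicates described above so that Walnut minimizes after each stage; with this care the computation terminates and the final automaton matches Figure~\ref{fig1}, which proves the theorem.
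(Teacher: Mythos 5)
Your proposal is correct and follows essentially the same route as the paper: both express ``some length-$n$ circular factor of $\bf c$ is squarefree'' as a first-order predicate over the $2$-automatic sequence $\bf c$, handle the wrap-around by a case split on whether indices exceed the block boundary (your $\rho_n$ is exactly the paper's three-way case analysis in $\crep$), and let {\tt Walnut} compute the automaton of Figure~\ref{fig1}. The staged construction via named intermediate predicates ($\crep$, $\facge$, $\circsf$) is also how the paper organizes the computation.
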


To prove this result, we make use of the fact that many first-order
statements concerning claims about $k$-automatic sequences are decidable \cite{Bruyere&Hansel&Michaux&Villemaire:1994}.   Furthermore, there is free software called {\tt Walnut} available
to decide these claims \cite{Mousavi:2016}.

Let $(n)_k$ denote the canonical base-$k$
representation of $n$, starting with the most significant
digit, having no leading zeroes.  A sequence $(a_n)_{n \geq 0}$
is {\it $k$-automatic\/} if there is a deterministic finite automaton
with output (DFAO) taking $(n)_k$ as input, and reaching a state
with $a_n$ as output.   For example, Figure~\ref{fig0} illustrates
the DFAO generating the sequence $\bf c$.  The notation
$q/a$ in a state means the name of the state is $q$ and the output is $a$.
\begin{figure}[H]
    \centering
    \includegraphics[width=4in]{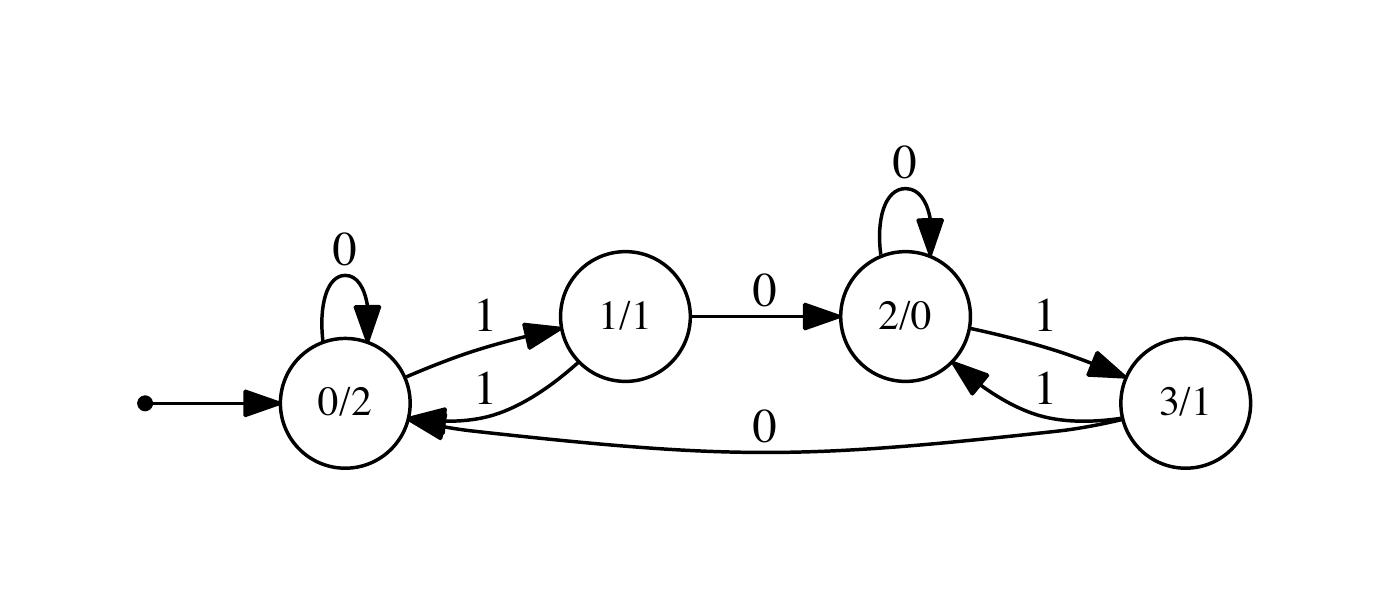}
    \caption{DFAO computing the sequence $\bf c$}
    \label{fig0}
\end{figure}
For more about automatic sequences, see \cite{Allouche&Shallit:2003}.

\begin{proof}
We can use the ideas in \cite{Shallit&Zarifi:2019}, adapted for our case.  We create first-order logical predicates $\crep$, $\facge$, and $\circsf$ 
as follows:
\begin{itemize}
    \item $\crep(i,m,p,n,s)$ evaluates to
{\tt true} iff in the length-$n$ word (considered circularly) starting
at position $s$ of the word $\bf c$, there is a factor $w$ of length $m$ and (not necessarily least) period $p \geq 1$ starting at position $i$;
    \item $\facge(n,s)$ evaluates to {\tt true} iff in the 
    length-$n$ word (considered circularly) starting at position $s$ of the word $\bf c$ there is a square or higher power;
    \item $\circsf(n)$ evaluates to {\tt true} iff some length-$n$ factor (considered circularly) of the word $\bf c$ has no squares.
\end{itemize}
\begin{small}
\begin{align*}
\crep(i,m,n,p,s) & := \exists j \,  ((j\geq i)\land (j+p<s+n)\land (j+p<i+m)) \implies C[j]=C[j+p]) \land \\
& (\forall j  \, ((j\geq i)\land (j<s+n)\land (j+p\geq s+n)\land(j+p<i+m)) \implies C[j]=C[j+p-n]) \land \\
&(\forall  j \,  ((j\geq i)\land (j\geq s+n) \land (j+p<i+m)) \implies C[j-n]=C[j+p-n])\\
\facge(n,s) & :=  \exists i,m,p \ (p\geq 1) \land (m \leq n) \land (i\geq s) \land (i<s+n) \land (m\geq 2p) \land \crep(i,m,n,p,s)\\
\circsf(n) & := \exists s \, \neg \facge(n,s)
\end{align*}
\end{small}
When we evaluate these predicates in {\tt Walnut}, we get the
automaton depicted in Figure~\ref{fig1}.  It accepts those $(n)_2$ for which {\tt circsf} evaluates to {\tt true}.
\end{proof}

\begin{remark}
All the {\tt Walnut} code for the theorems in this paper is available at\\
\centerline{\url{https://cs.uwaterloo.ca/~shallit/papers.html} \ .}
The reader can therefore verify our results.
\end{remark}

\begin{corollary}  The number of lengths $\ell$, with
$2^n \leq \ell < 2^{n+1}$ and $n \geq 4$,
such that $\bf c$ contains a
circularly squarefree factor of length $\ell$, is
$2^{n-3} - F_{n-3} + 2$, where $F_n$ is the $n$'th Fibonacci
number.
\end{corollary}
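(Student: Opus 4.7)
The plan is to translate the statement into a counting problem on the automaton $\mathcal{A}$ of Figure~\ref{fig1}. Lengths $\ell$ with $2^n \le \ell < 2^{n+1}$ are precisely those whose canonical binary representation has length $n+1$, so by the preceding theorem the desired quantity equals $a_{n+1}$, where $a_k$ denotes the number of length-$k$ binary strings starting with $1$ that are accepted by $\mathcal{A}$.

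By the standard transfer-matrix formalism for regular languages, $A(x) := \sum_{k \ge 0} a_k x^k$ is a rational function whose denominator divides $\det(I - xM)$, where $M$ is the transition matrix of $\mathcal{A}$; in particular, $a_k$ satisfies a linear recurrence with constant coefficients. The target formula $2^{n-3} - F_{n-3} + 2$ is a $\mathbb{Q}$-linear combination of $2^n$, $F_n$, and the constant sequence, which are annihilated by the degree-$4$ recurrence
\[
c_k = 4c_{k-1} - 4c_{k-2} - c_{k-3} + 2c_{k-4},
\]
whose characteristic polynomial factors as $(t-2)(t^2 - t - 1)(t-1)$. I therefore expect $A(x)$ in lowest terms to have denominator $(1-2x)(1-x-x^2)(1-x)$. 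To carry out the proof I would compute $A(x)$ from Figure~\ref{fig1} directly, either by solving the linear system $(I - xM)\mathbf{v}(x) = \mathbf{e}_{q_0}$ and summing the accept-state coordinates of $\mathbf{v}(x)$, or by using {\tt Walnut} to enumerate accepted strings of several small lengths and fit the rational function. Once $A(x)$ is known, a partial-fraction decomposition and reading off the coefficient of $x^{n+1}$ yields the closed form. Equivalently, having verified that $a_k$ satisfies the degree-$4$ recurrence above, it suffices to check agreement of the two sides at four consecutive values of $n \ge 4$ (done by direct enumeration or via {\tt Walnut}), since a linear recurrence of order $4$ is determined by any $4$ consecutive terms.

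The main obstacle is the bookkeeping in extracting the generating function $A(x)$ from the particular automaton in Figure~\ref{fig1}: one must verify that any ``extraneous'' factors in $\det(I - xM)$ contributed by transient or inaccessible states, as well as the ``starts with $1$'' constraint, all cancel from the numerator, so that the reduced denominator of $A(x)$ is exactly $(1-2x)(1-x-x^2)(1-x)$. No genuinely new combinatorial insight is needed once this rational form is in hand; the rest is routine partial fractions together with Binet's formula $F_n = (\varphi^n - \hat\varphi^n)/\sqrt{5}$.
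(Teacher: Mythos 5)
Your proposal is correct and follows essentially the same route as the paper, which likewise reduces the count to a transfer-matrix/characteristic-polynomial computation on the $15\times 15$ transition matrix of the automaton in Figure~\ref{fig1} and reads off the closed form from the roots $2$, $1$, and $(1\pm\sqrt{5})/2$. Your added remark that one must justify the degree-$4$ recurrence (rather than merely fit four values) before comparing initial terms is the right caveat, and is implicitly what ``standard techniques'' covers in the paper's one-line proof.
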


\begin{proof}
By standard techniques, by determining the roots of the characteristic polynomial of the $15 \times 15$ matrix encoding
transitions of the automaton in Fig.~\ref{fig1}.
\end{proof}

So while the factors of the ternary Thue-Morse word alone do not suffice for our purpose,
it turns out that a small modification of them do.
We now give the proof of our first main result, Theorem~\ref{currie}.

\begin{proof} (of Theorem~\ref{currie})
Let $n \ge 4$ and $w \in \{ x021, y2120 \}$, where $x,y$ are factors of the ternary Thue-Morse word $\bf c$ of lengths $n-3$ and $n-4$, respectively. 

First, we create a predicate $\sqoh(i,n,p,s)$ which evaluates to {\tt true} if $w' := x021x02$ contains a square of order $p$ with $p \geq 1$ and $2p \le n$ beginning at index $i-s$, where 
$x = {\bf c}[s..s+n-4]$. We do this by defining $w[j]$ for all $j$ such that $i\le j < i+p$ as follows:
$$w[j] = \begin{cases}
    {\bf c}[j], & \text{if } j < s+n-3; \\
         0,         &\text{if } j \in \{s+n-3,\ s+2n-3\}; \\
         2,         &\text{if } j \in \{s+n-2,\ s+2n-2\}; \\
         1,         &\text{if } j = s+n-1; \\
    {\bf c}[j-n], & \text{if } s+n \le j < s+2n-3.
    \end{cases}$$
The goal is that $\sqoh$ should represent
the implication
$$\forall j\, ((i\le j)
\land (j< i+p)) \implies w[j] = w[j+p].$$ 
It is formed by constructing the conjunction of the predicates 
$$\forall j\,  ((i\le j)
\land (j< i+p) \land (w[j] = \alpha) \land (w[j+p] = \beta)) \implies \alpha = \beta$$
for each possible combination $j$ and $j+p$, and simplifying.   

Next, we create a second predicate $\sqfoh(n,s)$, which evaluates to {\tt true} if there exists $x$ where $w = x021$ is circularly squarefree, for the given values of $n$ and $s$:
\begin{multline*}
\sqfoh(i,n,p,s) := 
(n > 3) \land (\forall i,p \, ((1 \leq p) \land (2p \leq n) \land (s \leq i) \land (i < s+n)) \\
    \implies \neg(\sqoh(i,n,p,s))) .
\end{multline*}
Similarly, we create the analogous predicates  $\sqtwo(i,n,p,s)$ and $\sqftwo(n,s)$ for the word $w' := y2120y212$.

Finally, the predicates
\begin{align*}
\testone(n) := \exists s \, \sqfoh(n,s) \\
\testtwo(n) := \exists s \, \sqftwo(n,s) 
\end{align*}
return {\tt true} if there exists a length-$n$ squarefree word formed by concatenating
some factor of $\bf c$ with $021$
(respectively, $2120$).
The automaton for $\testone(n)$ is
depicted in Figure~\ref{cfig1} and the automaton for
$\testtwo(n)$ is depicted in
Figure~\ref{cfig2}.
\begin{figure}[H]
    \centering
    \includegraphics[width=6in]{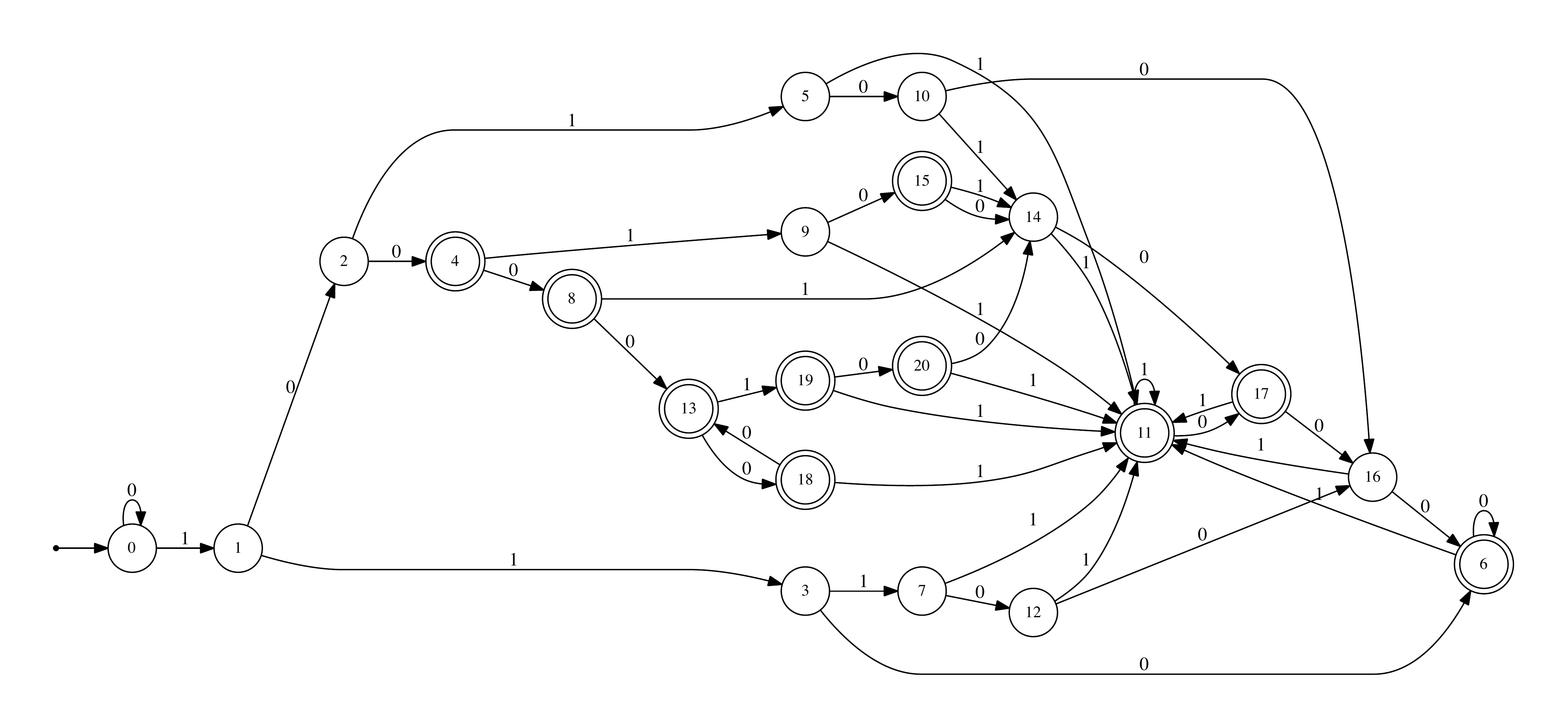}
    \caption{DFA computing $\exists s \, \sqfoh(n,s)$}
    \label{cfig1}
\end{figure}
\begin{figure}[H]
    \centering
    \includegraphics[width=6in]{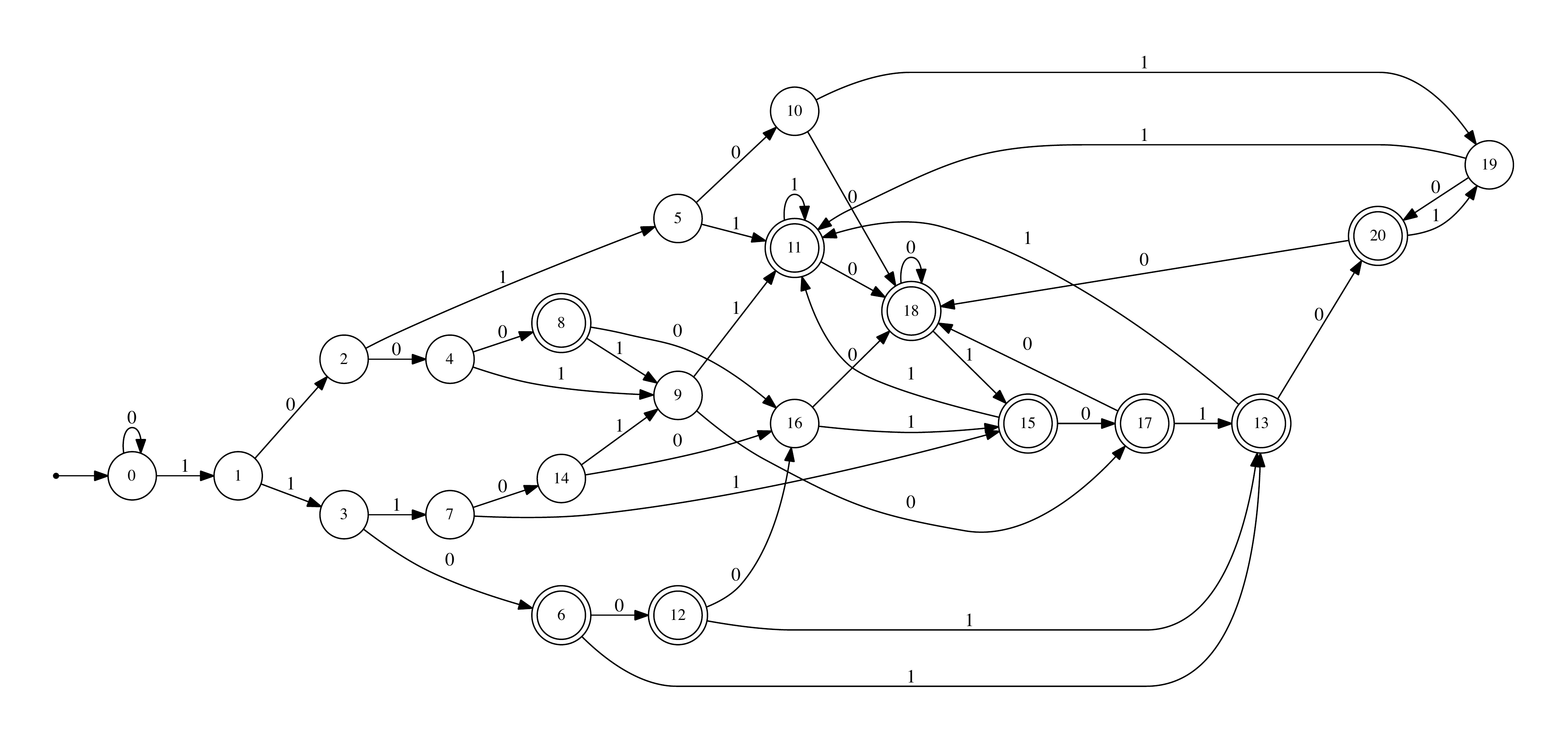}
    \caption{DFA computing $\exists s \, \sqftwo(n,s)$}
    \label{cfig2}
\end{figure}

When we now evaluate the predicate
$$ \currie(n) :=  \testone(n) \lor \testtwo(n) $$
with {\tt Walnut}, we get the automaton depicted
in Figure~\ref{curriefig}.
\begin{figure}[H]
    \centering
    \includegraphics[width=6in]{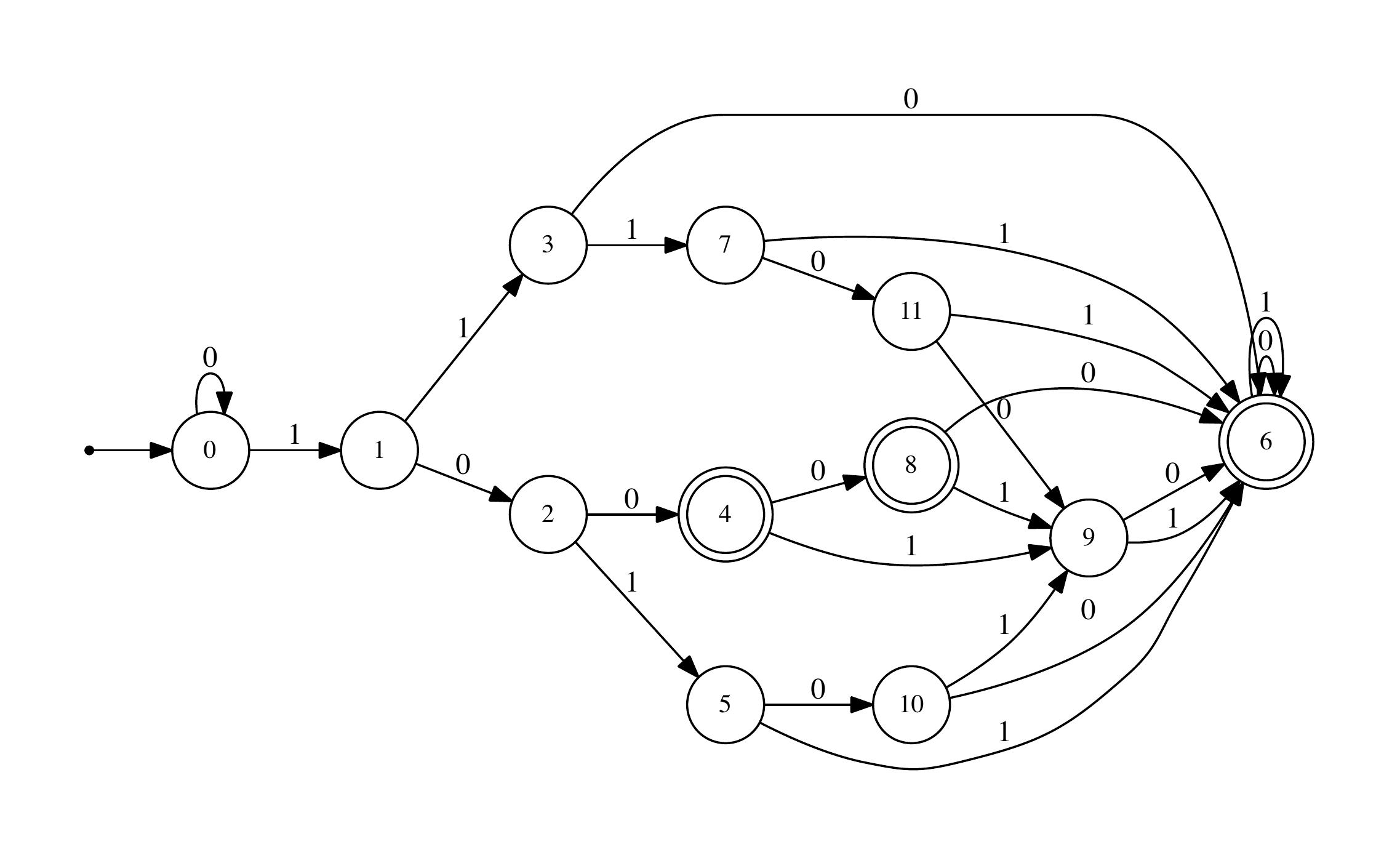}
    \caption{DFA computing acceptable $n$}
    \label{curriefig}
\end{figure}
By inspection we easily see that the automaton in Figure~\ref{curriefig}
accepts the base-$2$ representation of all $n$ except
$0,1,2,3,5,7,9,10,14,17,21,28$. 
\end{proof}
As a consequence we now get Currie's theorem:
\begin{corollary}
 There exist circularly squarefree ternary words of every length $n$, except for $n \in \{ 5, 7, 9, 10, 14, 17 \}$.
\end{corollary}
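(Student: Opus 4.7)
The plan is to reduce the corollary to Theorem~\ref{currie} plus a short finite check, treating the existence and non-existence halves separately.

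For the existence half, I would proceed in four steps. First, for $n \in \{1,2,3,4\}$ exhibit explicit circularly squarefree ternary words (for instance $0$, $01$, $012$, $0102$) and verify by inspection that none of their cyclic rotations contains a square. Second, for every $n > 3$ with $n \notin \{5,7,9,10,14,17,21,28\}$, invoke Theorem~\ref{currie} directly: there is a factor $x$ of $\mathbf{c}$ such that $x021$ or $x2120$ is circularly squarefree of length $n$, so we are done. Third, for the two leftover cases $n=21$ and $n=28$, which are excluded from Theorem~\ref{currie} but not from Currie's statement, I would exhibit specific circularly squarefree words of those lengths obtained by a straightforward computer search (over the $3^{21}$, resp.\ $3^{28}$ ternary words, or, more efficiently, over Lyndon representatives, checking each for circular squarefreeness). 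Storing one such word for each of these two lengths yields the required witnesses.

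For the non-existence half, we must verify that for each $n \in \{5,7,9,10,14,17\}$, no circularly squarefree ternary word of length $n$ exists. This is a finite assertion: for each such $n$ enumerate all $3^n$ ternary words of length $n$ (the largest case is $3^{17} \approx 1.3\times 10^8$, easily within reach), and for each one test whether every cyclic rotation is squarefree. A complete enumeration confirms the claim. Alternatively, one can use the standard backtracking construction of squarefree ternary words and check that none of them can be closed into a circular squarefree word of those lengths.

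The main obstacle is conceptual rather than technical: Theorem~\ref{currie} is an \emph{existence} statement about factors of $\mathbf{c}$, so it alone cannot establish non-existence at $n \in \{5,7,9,10,14,17\}$, nor does it cover $n\in\{21,28\}$. Both gaps, however, close with a bounded finite computation because the relevant lengths are small and the number of candidate words finite. Once these two verifications are in place, the corollary follows by simply combining the lists of exceptions: the exceptions of Theorem~\ref{currie} minus $\{21,28\}$ (handled by explicit witnesses) minus $\{0,1,2,3\}$ (either vacuous or handled by hand) yields exactly $\{5,7,9,10,14,17\}$, matching Currie's original result.
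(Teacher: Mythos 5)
Your proposal is correct and follows essentially the same route as the paper: apply Theorem~\ref{currie} for all but finitely many $n$, then settle the small lengths $0$--$4$, $21$, $28$ (existence) and $5,7,9,10,14,17$ (non-existence) by a finite computation. The paper compresses all of this into the sentence ``it is easy to verify by a short computation,'' while you spell out how that computation would go; there is no substantive difference.
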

\begin{proof}
Theorem~\ref{currie} gives the result for all but finitely many $n$.
It is easy to verify by a short computation that there are cyclically squarefree words of lengths $0,1,2,3, 21, 28$, and
none for lengths $5,7,9,10,14,17$.  
\end{proof}


\begin{remark}  These calculations were done in {\tt Walnut} on a Linux machine (2 CPU --- Intel E5-2697 v3 Xeon, 256 GB of RAM).
Computing the automaton for $\sqoh$ took 115.505 seconds, and the automaton for $\sqtwo$ took 124.908 seconds.
\end{remark}


\section{Unbordered conjugates}
Let $\sigma: \Sigma_k^* \to \Sigma_k^*$ denote the \emph{cyclic shift function}, where $\sigma(\epsilon) = \epsilon$, $\sigma(cw) = wc$ for $w\in \Sigma_k^*$ and $c\in \Sigma_k$. Let $\sigma^0(w) = w$ and $\sigma^i(w) = \sigma^{i-1}(\sigma(w))$ for $i\geq 1$. 

Suppose $w$ is a binary word of length $n$.
Let $\beta: \Sigma_k^* \to \Sigma_k^*$ be the \emph{border correlation function} of a word (introduced by Harju and Nowotka \cite{Harju&Nowotka:2004}),
and defined as follows:  $\beta(w) = a_0a_1\cdots a_{n-1}$, where
\[a_i = \begin{cases}u, & \text{if }\sigma^i(w)\text{ is unbordered;}\\
b, & \text{if }\sigma^i(w)\text{ is bordered.}\end{cases}\]
For example, $\beta(0001) = ubbu$ since $0001$ is unbordered, while $0010,$ and $0100$ are both bordered, and $1000$ is unbordered. Let $u,v\in \Sigma_k^*$. We say $u$ is the {\it $i$'th cyclic shift\/} of $v$ if $\sigma^i(v) = u$.

 A result from Harju and Nowotka \cite{Harju&Nowotka:2004} shows that a binary word has no two consecutive cyclic shifts that are unbordered. This result immediately tells us that a binary word of length $n$ can have at most $\lfloor n/2 \rfloor$ unbordered conjugates. For a binary word $w$ of even length to achieve this bound, every other cyclic shift must be unbordered, or, in other words either $\beta(w) = (ub)^{|w|/2}$ or $\beta(w) = (bu)^{|w|/2}$. Harju and Nowotka \cite{Harju&Nowotka:2004} showed that the only words of even length that achieve this bound are the circularly overlap-free words, which are of length $3\cdot 2^i$ and $2^i$ for $i\geq 1$.

 Let $w$ be a binary word. Suppose $w$ is of even length and is not circularly overlap-free. Clearly $w$ cannot have $|w|/2$ unbordered conjugates, but  it could potentially have $|w|/2-1$ unbordered conjugates. Then $\beta(w) = (ub)^ib(ub)^{|w|/2-i-1}b$ for some $i\geq 0$, up to conjugation. Now suppose $w$ is of odd length. No circularly overlap-free words exist of odd length, so it makes sense to think that $w$ could contain a maximum of $\lfloor |w|/2\rfloor$ unbordered conjugates. Then $\beta(w) = (ub)^{\lfloor |w|/2\rfloor} b$, up to conjugation.

Let $w$ be a bordered binary word. Then $w=uvu$ for some words $u$ and $v$. We say $w[1..|u|]$ is the {\it first border\/} of $w$, and $w[|w|-|u|+1..|w|]$ is the {\it second border\/} of $w$.

Now we prove Theorem~\ref{theorem:maxUn}.
\begin{proof}
When $n=1,2,3$ the maxmium number of unbordered conjugates $\mnuc_2(n)$ is achieved by the words $0$, $01$, and $011$ respectively. Specifically we have that $\mnuc_2(1)=1$, $\mnuc_2(2) = 2$, and $\mnuc_2(3) = 2$. It is readily verified that each of these words occur as a factor of the Thue-Morse word at position $\leq n$.

Let $w$ be a length-$n$ word at position $m$ of the Thue-Morse word. The first step is to create a first-order predicate $\isBorder(l,m,n)$ that asserts that a cyclic shift of $w$ has a border of a certain length.  More specifically, we want to know whether the $l$'th cyclic shift of $w$ has a border of length $k$. There are three cases to consider. 

\begin{enumerate}
\item When a prefix of the second border is a suffix of $w$ and a suffix of the second border is a prefix of $w$. In other words, $w=yuvx$ for words $u,v,x,y$ where $xy = u$, $|y| = l$, and $|u|=k$. This predicate is denoted by $\isBorderCone(k,l,m,n)$.
\item When both borders are completely contained inside of $w$. In other words, $w=yuux$ for words $y,u,x$ where $|yu|=l$, and $|u|=k$. This predicate is denoted by $\isBorderCtwo(k,l,m,n)$.
\item When a prefix of the first border is a suffix of $w$ and a suffix of the first border is a prefix of $w$. In other words, $w=yvux$ for words $u,v,x,y$ where $xy = u$, $|yvu|=l$, and $|u|=k$. This predicate is denoted by $\isBorderCthree(k,l,m,n)$.
\end{enumerate}
\begin{align*}
\footnotesize
\isBorderCone(k,l,m,n) & := ((k+l > n) \Rightarrow ((\forall i (i<n-l) \Rightarrow T[m+l+i] = T[m+l-k+i]) \\
             & \land  (\forall i (i<k+l-n) \Rightarrow T[m+i] = T[m+n-k+i])))\\
\isBorderCtwo(k,l,m,n) & :=  (((k+l \leq n) \land (l \geq k)) \Rightarrow (\forall i \, (i< k) \Rightarrow \\
& T[m+l+i] = T[m+l-k+i])) \\
\isBorderCthree(k,l,m,n) &:=  (((k+l \leq n) \land (l < k )) \Rightarrow ((\forall i \, (i< k-l) \Rightarrow 
 T[m+n-k+l+i] \\ 
 & = T[m+l+i])   
\land(\forall i \, (i< l) \Rightarrow T[m+i] = T[m+k+i])))\\
 \isBorder(k,l,m,n) &:=  \isBorderCone(k,l,m,n) \land \isBorderCtwo(k,l,m,n) \land \isBorderCthree(k,l,m,n) .
\end{align*}

We define the predicate $\isBordered(l,m,n)$ that asserts that the $l$'th cyclic shift of a length $n$ word at position $m$ in the Thue-Morse word is bordered. We can create this predicate by checking whether this word has a border of size $\leq n/2$.
\[ \isBordered(l,m,n) := \exists i (2i \leq n \land i \geq 1 \land \isBorder(i,l,m,n)) .\] 
 
Recall that when $|w|$ is odd and $w$ has a maximum number of unbordered conjugates, we have that $\beta(w) = (ub)^{\lfloor |w|/2\rfloor} b$, up to conjugation. So we have exactly one pair of adjacent bordered cyclic shifts, and the rest of the cyclic shifts of $w$ alternate between bordered and unbordered. The predicate $\isAlternatingzero(l,m,n)$ asserts that all of the cyclic shifts of a length $n$ word at position $m$ in the Thue-Morse word alternate between unbordered and bordered, except for the $l$'th and $l+1$'th cyclic shifts, which are both bordered.
\begin{align*}
&\isAlternatingzero(l,m,n) := \\
&\forall i (((i \neq l \land i<n-1) \Rightarrow (\isBordered(i,m,n) = \lnot \isBordered(i+1,m,n))))\land  \\
&(((i \neq l) \land (i=n-1)) \Rightarrow (\isBordered(n-1,m,n) = \lnot \isBordered(0,m,n))) . 
\end{align*}

Now we create a predicate $\hasMNUCO(m,n)$ that asserts that a length $n$ word at position $m$ in the Thue-Morse word achieves the maximum number of unbordered conjugates.
\begin{align*}
&\hasMNUCO(m,n) := \exists i (((i<n-1 \land \isBordered(i,m,n) \land \isBordered(i+1,m,n)) \lor \\
&(i=n-1 \land \isBordered(n-1,m,n) \land \isBordered(0,m,n))) \land \nonumber 
\isAlternatingzero(i,m,n)) .
\end{align*}

Similarly, recall that when $|w|$ is even and $w$ has a maximum number of unbordered conjugates, we have that $\beta(w) = (ub)^ib(ub)^{|w|/2-i-1}b$ for some $i\geq 0$ or $\beta(w) = (ub)^{|w|/2}$, up to conjugation. So we have that either all of the cyclic shifts of $w$ alternate between bordered and unbordered, or there are exactly two pairs of adjacent bordered cyclic shifts, and the rest of the cyclic shifts of $w$ alternate between bordered and unbordered. The predicate $\isAlternatingE(e,l,m,n)$ asserts that all of the cyclic shifts of a length $n$ word at position $m$ in the Thue-Morse word alternate between unbordered and bordered, except for the $l$'th, $l+1$'th, $e$'th, and $e+1$'th cyclic shifts, which are all bordered. Note that $\isAlternatingE(n,n,m,n)$ asserts that all of the cyclic shifts of a length $n$ word at position $m$ in the Thue-Morse word alternate between unbordered and bordered.
\begin{align*}
\isAlternatingE(e,l,m,n) &:= (\forall i\, (((i \neq l \land i\neq e \land i<n-1) \Rightarrow (\isBordered(i,m,n) \Leftrightarrow \\
&\lnot \isBordered(i+1,m,n)))) \land (((i \neq l) \land (i\neq  e) \land (i=n-1)) \Rightarrow  \\
& (\isBordered(n-1,m,n) \Leftrightarrow  \lnot \isBordered(0,m,n)))) 
\end{align*}

Now we create a predicate $\hasMNUCE(m,n)$ that asserts that a length $n$ word at position $m$ in the Thue-Morse word achieves the maximum number of unbordered conjugates. 
\begin{align*}
 \hasMNUCE(m,n) &:= (\exists i,j\, ((i < j) \land (i<n-1 \land \isBordered(i,m,n) \land \isBordered(i+1,m,n)) \land  \\
 &((j=n-1 \land \isBordered(n-1,m,n) \land \isBordered(0,m,n)) \lor ((j<n-1) \land \\
 &\isBordered(j,m,n) \land \isBordered(j+1,m,n))) \land \isAlternatingE(i,j,m,n))) \lor \nonumber \\
 &\isAlternatingE(n,n,m,n) .
\end{align*}

With these predicates we can write a predicate asserting that the Thue-Morse word contains factors of every length $n>3$ that are maximally unbordered and occur at position $\leq n$. We split the computation into cases, one for even length words, and one for odd:
\begin{align*}
    \forall n \, ((n \geq 2) & \implies ( \exists i \, \hasMNUCE(i,2n)) \land i \leq 2n) \\
\forall n \, ((n\geq 2) &\implies (\exists i \hasMNUCO(i,2n + 1)) \land i \leq 2n + 1),
\end{align*}
and {\tt Walnut} evaluates these predicates to be true.
\end{proof}
Thus we have that  \[\mnuc_2(n) = \begin{cases} 
      1,    & \text{if } n=1; \\
      2,    & \text{if } n=2 \text{ or }n=3;\\
      n/2, & \text{if } n\in \{2^i,3\cdot 2^i: i\geq 1\}; \\
      n/2 - 1, & \text{if } n> 3\text{ even and }n \not\in \{2^i, 3\cdot 2^i: i\geq 1\};  \\
       \lfloor n/2 \rfloor, & \text{if } n>3 \text{ odd}.
   \end{cases}
\]

\begin{theorem}
Let $f(n) = \mnuc_2(n) - \lfloor n/2 \rfloor$.  Then $f$ is a $2$-automatic sequence.
\end{theorem}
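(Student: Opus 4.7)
The plan is to read $2$-automaticity off directly from the explicit formula for $\mnuc_2(n)$ displayed immediately before the statement. The key observation is that $f$ has finite range $\{-1,0,1\}$: specifically, $f(n)=1$ for $n\in\{1,2,3\}$; $f(n)=0$ when either $n\in\{2^i,\, 3\cdot 2^i : i\geq 1\}$ or $n>3$ is odd; and $f(n)=-1$ for the remaining even $n>3$. Since a sequence with finite range is $2$-automatic if and only if each of its level sets has a regular base-$2$ representation, it suffices to exhibit, for each $c\in\{-1,0,1\}$, a DFA recognizing the base-$2$ representations of $f^{-1}(c)$.

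Each of the three preimages is manifestly $2$-recognizable. The set $\{1,2,3\}$ is finite; the base-$2$ representations of $\{2^i:i\geq 1\}$ and $\{3\cdot 2^i:i\geq 1\}$ are the regular languages $10^+$ and $110^+$; and the odd integers greater than $3$ are recognized by a small DFA accepting base-$2$ strings of length at least $3$ ending in $1$. Hence $f^{-1}(0)$ and $f^{-1}(1)$ are $2$-recognizable, and $f^{-1}(-1)$, as the complement of their union in the domain of $f$, is too. A standard product construction on the three small DFAs then assembles a DFAO computing $f$, proving the claim. In the spirit of the rest of the paper, essentially the same DFAO can be obtained automatically inside Walnut: the predicates $\hasMNUCE(m,n)$ and $\hasMNUCO(m,n)$ of Theorem~\ref{theorem:maxUn}, guarded by $m\leq n$, define the level sets of $f$ in the structure $\langle \mathbb{N},+,V_2,\mathbf{t}\rangle$, so the Bruy\`ere--Hansel--Michaux--Villemaire theorem produces them as $2$-automatic sets directly.

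The main obstacle, such as it is, has already been overcome in Theorem~\ref{theorem:maxUn} and the case analysis following it: once $\mnuc_2(n)$ has been pinned down by a formula whose only side conditions are ``power of two'', ``three times a power of two'', parity, and ``at most three''---each a trivial finite-state check on the binary digits of $n$---the $2$-automaticity of the derived sequence $f$ is essentially a formality, and I would expect the write-up to occupy only a few lines.
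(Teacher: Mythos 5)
Your proof is correct and is precisely the argument the paper intends: the paper states this theorem with no proof at all, treating it as immediate from the displayed formula for $\mnuc_2(n)$, since $f$ has finite range $\{-1,0,1\}$ and each level set is $2$-recognizable (finite sets, $10^*$, $110^*$, parity). One cosmetic point: the cases of the displayed formula must be read top-down, so $n=2$ lies in $f^{-1}(1)$ rather than in the $\{2^i, 3\cdot 2^i\}$ clause; your level sets as written overlap at $n=2$, but this does not affect the argument.
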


\section{More about unbordered conjugates}

In this section we show that there exist binary words of length $n$ that have exactly $i$ unbordered conjugates where $1< i\leq \mnuc_2(n)$. 

The general idea behind the proof is to pick some $i>1$ and then pick a word $w$ of odd length such that $\nuc(w) = i$ and $\mnuc_2(|w|) = i$. Furthermore we only consider such words $w$ such that one of $w$'s conjugates contain $000$ as a factor. Then we keep adding $0$'s to $w$ precisely where $000$ first occurs. This keeps the number of unbordered conjugates the same. Then we can keep increasing the size of $w$ in this way until we hit the length we want.

\begin{lemma}\label{lemma:oddOverlap}
For $n>4$ odd, there exists a word $w\in \Sigma_2^n$ such that $\nuc(w)=\lfloor n/2\rfloor$ and $000$ is a factor of some conjugate of $w$.
\end{lemma}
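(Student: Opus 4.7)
The plan is to reuse and slightly strengthen the Walnut machinery from the proof of Theorem~\ref{theorem:maxUn}.  That theorem already guarantees, for each odd $n>3$, a factor of the Thue-Morse word $\bf t$ of length $n$ with $\lfloor n/2\rfloor$ unbordered conjugates; what remains is to select such a factor whose cyclic wrap-around contains $000$.  Because $\bf t$ is overlap-free, no three consecutive zeros can lie strictly inside a factor of $\bf t$, so any occurrence of $000$ in a conjugate of such a factor must straddle the seam between position $m+n-1$ and position $m$.

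Concretely, I would first observe that a conjugate of $w = T[m..m+n-1]$ contains $000$ if and only if
\[
(T[m+n-1]=0\land T[m]=0\land T[m+1]=0)\ \lor\ (T[m+n-2]=0\land T[m+n-1]=0\land T[m]=0),
\]
which captures exactly the two cyclic three-windows that cross the seam.  Then I would define a strengthened predicate $P(m,n)$ by conjoining this boundary condition with $\hasMNUCO(m,n)$, and ask \texttt{Walnut} to evaluate
\[
\forall k\, \bigl((k\geq 2)\implies \exists m\ P(m,2k+1)\bigr).
\]
If \texttt{Walnut} returns \texttt{true}, the lemma is immediate: the witness $m$ supplies a length-$n$ factor of $\bf t$ with $\lfloor n/2\rfloor$ unbordered conjugates and a conjugate containing $000$.

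The main obstacle is that it is not a priori obvious that, for every odd $n>4$, \emph{some} Thue-Morse factor achieving $\mnuc_2(n)$ also satisfies the boundary condition; this is precisely what \texttt{Walnut} will confirm or refute.  As evidence that the approach is viable, I have hand-verified the cases $n\in\{5,7,9\}$ via the factors $01100$, $0110100$, and $010110100$, each of which achieves the maximum and produces $000$ across the seam.  Should \texttt{Walnut} fail on a finite exceptional set, those cases can be handled by direct computation.  Should it fail on an infinite family, the fallback is to perturb the construction---for example, append a single $0$ to a Thue-Morse factor of length $n-1$, or two $0$'s to a factor of length $n-2$---and rebuild $\isBorder$, $\isBordered$, $\isAlternatingzero$, and $\hasMNUCO$ around the perturbed definition of $w$, in direct analogy with the suffix-grafting done in the proof of Theorem~\ref{currie}.
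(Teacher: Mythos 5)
Your proposal is essentially identical to the paper's proof: the paper likewise invokes Theorem~\ref{theorem:maxUn}, notes that overlap-freeness of $\bf t$ forces any $000$ in a conjugate to arise from a factor of the form $00u0$ or $0u00$, and verifies with {\tt Walnut} that for every odd $n>4$ some factor satisfying $\hasMNUCO$ also satisfies exactly the two boundary conditions you wrote down. The fallback constructions you sketch turn out to be unnecessary, since the {\tt Walnut} query evaluates to true.
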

\begin{proof}
By Theorem~\ref{theorem:maxUn}, such a word $w$ exists as a factor of the Thue-Morse word. It is well known that the Thue-Morse word is overlap-free. So $000$ cannot be a factor of such a word $w$. But it is possible that $w= 0u00$, or $w= 00u0$ for some word $u$. We can check whether this is the case for all odd $n>4$ by modifying our predicate from the proof of Theorem~\ref{theorem:maxUn}:
\begin{multline*}
\forall n \, ((n \geq 2) \implies 
( \exists i \, \hasMNUCO(i,2n + 1)) \land 
 ((T[i]=0 \land T[i+1]=0 \land T[2n+i]=0) \\ \lor 
(T[i]=0 \land T[2n-1+i]=0 \land T[2n+i]=0))),
\end{multline*}
which evaluates to true.
\end{proof}

\begin{lemma}\label{lemma:one000}
Let $n>4$ be odd and $w$ be a binary word of length $n$ such that a conjugate of $w$ has $000$ as a factor and $\nuc(w) = \lfloor n/2\rfloor$. Then every conjugate of $w$ contains at most one distinct occurrence of $000$ as a factor.
\end{lemma}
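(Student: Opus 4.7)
The strategy is to reduce the statement to a purely cyclic claim and then use the forced shape of $\beta(w)$. First I would note that a cyclic occurrence of $000$ at position $p$ of $w$ fails to appear as a linear occurrence only in the two conjugates $\sigma^{p+1}(w)$ and $\sigma^{p+2}(w)$, where the three zeros straddle the cut. Hence if $w$ had two distinct cyclic occurrences of $000$, the set of rotations hiding at least one of them has size at most $4$, and since $n > 4$ some rotation exhibits both, giving a conjugate with two distinct linear occurrences of $000$. So it suffices to prove that $w$ admits at most one cyclic occurrence of $000$.

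The main lever is that every cyclic occurrence of $000$ at position $p$ forces a consecutive pair of bordered conjugates: both $\sigma^{p+1}(w)$ and $\sigma^{p+2}(w)$ begin with $0$ and end with $0$ (with $w[p]$ and $w[(p+1)\bmod n]$, respectively), so each admits $0$ as a border; equivalently, $\beta(w)$ has $b$ at cyclic positions $p+1$ and $p+2 \pmod n$. On the other hand, since $|w|$ is odd and $\nuc(w) = \lfloor n/2\rfloor$, the classification already recalled forces $\beta(w) = (ub)^{\lfloor n/2\rfloor}\, b$ up to cyclic rotation, so $\beta(w)$ read cyclically is alternating except at exactly one adjacent $bb$. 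In particular it contains neither three cyclically consecutive $b$'s nor two disjoint consecutive-$b$ pairs. Assume for contradiction that there are two distinct cyclic occurrences of $000$ at positions $p_1 \ne p_2$, with induced $b$-pairs $\{p_1+1, p_1+2\}$ and $\{p_2+1, p_2+2\}$ modulo $n$. If these pairs share a position, their union contains three cyclically consecutive $b$'s; otherwise they are disjoint, producing two disjoint $bb$-pairs. Either case contradicts the rigid structure of $\beta(w)$.

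The main obstacle I anticipate is the careful cyclic index bookkeeping in the reduction step, namely confirming that the two ``hidden'' rotations per cyclic occurrence are the only ones, so that the hypothesis $n > 4$ genuinely guarantees a common rotation whenever two cyclic occurrences exist. Once that is in hand, the structural case analysis against the fixed skeleton of $\beta(w)$ is short and essentially mechanical.
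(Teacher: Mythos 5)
Your proof is correct, but it is routed differently from the paper's. The paper fixes a conjugate $w'$ containing two linear occurrences of $000$, splits into the overlapping case $w'=s0000t$ and the disjoint case $w'=s000t000$, explicitly lists the three (resp.\ four) cyclic shifts that begin and end in $0$ and are hence bordered, and then counts directly: since no two consecutive cyclic shifts of a binary word are unbordered, at most $\lfloor (n-4)/2\rfloor+1$ (resp.\ $\lfloor (n-6)/2\rfloor+2$) shifts can be unbordered, which falls short of $\lfloor n/2\rfloor$. You instead work with cyclic occurrences, note that each one stamps a consecutive $bb$ into $\beta(w)$, and then invoke the rigidity $\beta(w)=(ub)^{\lfloor n/2\rfloor}b$ (up to rotation) recalled earlier in the paper: two distinct occurrences would force either a $bbb$ run or two disjoint $bb$ pairs, neither of which that skeleton admits. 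Both arguments rest on the same Harju--Nowotka alternation fact; yours packages the arithmetic into the $\beta(w)$ classification, which makes the case analysis cleaner but leans on a statement the paper asserts without proof (it does follow by a one-line pigeonhole from the alternation property, so this is harmless). One minor logical quibble: your opening reduction establishes ``two cyclic occurrences imply some conjugate exhibits both linearly,'' whereas the direction needed for ``it suffices to prove the cyclic claim'' is the trivial converse (distinct linear occurrences within one conjugate are distinct cyclic occurrences); the reduction stands, but the justification you give is for the wrong implication.
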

\begin{proof}
Suppose, contrary to what we want to prove that a conjugate of $w$ contains at least two distinct occurrences of $000$ as a factor. Call this conjugate $w'$. 

If the two occurrences of $000$ overlap, then we can write $w' = s0000t $ for some words $s,t$. Then the cyclic shifts $0ts000$, $00ts00$, and $0ts000$ are bordered. This means that only $\lfloor |ts| /2 \rfloor+1$ of the remaining cyclic shifts of $w$ can be unbordered since any unbordered cyclic shift must be followed by a bordered one. But $\lfloor |ts| /2 \rfloor +1 = \lfloor(n-4)/2\rfloor+1 < \lfloor n/2 \rfloor$, so the two occurrences of $000$ cannot overlap. 

If the two occurrences of $000$ do not overlap, then we can write $w' = s000t000$ for some words $s,t$ where $s$, and $t$ are non-empty. Then the conjugates $00t000s0$, $0t000s00$, $00s000t0$, and $0s000t00$ are bordered. By the same argument as above, of the remaining cyclic shifts, a maximum of $\lfloor |st|/2 \rfloor + 2$ of them can be unbordered. But $\lfloor |st|/2\rfloor + 2 = \lfloor (n-6)/2)\rfloor + 2 < \lfloor n/2 \rfloor$, a contradiction.
\end{proof}

\begin{lemma}\label{lemma:add0s}
Let $n>4$ be odd and $w$ be a binary word of length $n$ such that a conjugate $w'$ of $w$ has $000$ as a prefix and $\nuc(w) = \lfloor n/2\rfloor$. Then $\nuc(w)=\nuc(w') = \nuc(0^{i}w')$ for all $i\geq 0$.
\end{lemma}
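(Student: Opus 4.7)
The first equality $\nuc(w) = \nuc(w')$ is immediate: conjugate words have the same multiset of cyclic shifts. The substance is showing $\nuc(w') = \nuc(0^{i} w')$ for all $i \geq 0$, which I would prove by induction on $i$. The base case is trivial, and the inductive step reduces to the single claim: $\nuc(0^{j}v) = \nuc(0^{j+1}v)$ for all $j \geq 3$, where $v$ is defined by $w' = 000\,v$.

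Before attacking the inductive step, I would use Lemma~\ref{lemma:one000} to extract the structure of $v$. Since $000$ is a prefix of $w'$, it is the only occurrence of $000$ in $w'$ and in every conjugate of $w'$. This forces $v_0 = v_{|v|-1} = 1$ and $v$ itself to be $000$-free: if $v$ started with $0$, then $0000$ would appear in $w'$ giving two distinct $000$'s; if $v$ ended with $0$, the conjugate $v\,000$ would have the same defect; and any internal $000$ in $v$ would give a second occurrence distinct from the prefix $000$ of $w'$.

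For the inductive step, let $u = 0^{j}v$ with $j \geq 3$ and split the conjugates into two groups. The \emph{zero-part} shifts $\sigma^{\ell}(u) = 0^{j-\ell}\,v\,0^{\ell}$ for $0 \leq \ell \leq j-1$ contribute exactly one unbordered conjugate, namely $\sigma^{0}(u)$ (its first letter is $0$ and its last letter is $v_{|v|-1} = 1$); the others start and end with $0$ and are bordered. The \emph{$v$-part} shifts are $\sigma^{j+k}(u) = A\,0^{j}\,B$ for $0 \leq k \leq |v|-1$, where $A = v_k \cdots v_{|v|-1}$ ends in $1$ (or is empty), $B = v_0 \cdots v_{k-1}$ starts with $1$ (or is empty), and both are $000$-free as factors of $v$. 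The analogous decomposition of $0u = 0^{j+1}v$ naturally pairs $A\,0^{j}\,B$ with $A\,0^{j+1}\,B$, while adding one extra zero-part shift $\sigma^{j}(0u) = 0\,v\,0^{j}$, which is bordered. Thus $\nuc(u) = \nuc(0u)$ will follow once I establish the key claim: $A\,0^{j}\,B$ is unbordered iff $A\,0^{j+1}\,B$ is unbordered.

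To prove the key claim, I would invoke the Lyndon--Sch\"utzenberger theorem to restrict attention to borders $b$ of length at most half the word length, then argue that any such border is automatically a border of the other word. A case analysis on $|b|$ versus $|A|$ and $|B|$ handles this. If $|b| \leq \min(|A|,|B|)$, then $b$ is simultaneously a prefix of $A$ and a suffix of $B$, hence a border of both words verbatim. If $b$ extends past $A$ into the zero block as $A\,0^{s}$ while staying inside $B$ on the suffix side, then $B$ must end in $0^{s}$; the $000$-freeness of $B$ forces $s \leq 2$, so the same $b$ fits inside zero blocks of length $j \geq 3$ and $j+1$. The symmetric case (the suffix occurrence extending backward into the zero block) is analogous, using $000$-freeness of $A$. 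Finally, the case where $b$ extends past both $A$ and $B$ simultaneously forces $A$ to begin with and $B$ to end with a few zeros, all bounded by $2$ by the $000$-freeness assumption, so again the extra zero in $0u$ is invisible. The main obstacle is the bookkeeping — handling the edge cases where $A$ or $B$ is empty and verifying in every case that the shared string $b$ is simultaneously a prefix and suffix of both $A\,0^{j}\,B$ and $A\,0^{j+1}\,B$; the unifying principle is that $000$-freeness of the non-zero portions keeps every border so ``short'' across the zero block that the one extra $0$ cannot help or hurt.
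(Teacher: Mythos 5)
Your proposal is correct and follows essentially the same route as the paper: both reduce the problem to showing that the shifts splitting the zero block are trivially bordered and that, for the remaining shifts $A0^jB$, the uniqueness of the $000$ occurrence (Lemma~\ref{lemma:one000}) bounds any border by $\min\{|A|,|B|\}+2$, so the border survives lengthening the zero block; the paper simply does this in one step (comparing $s000t$ with $s0^{i+3}t$) rather than by induction on $i$. One small caveat: your parenthetical justification that $\sigma^0(u)=0^jv$ is unbordered because it starts with $0$ and ends with $1$ only rules out a length-one border --- you should instead treat $0^jv$ as the pair $A0^jB$ with $A=\epsilon$ and $B=v$, which your key claim (with the empty-$A$ edge case you already promise to handle) covers.
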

\begin{proof}
Let $i\geq 0$ be an integer. We can write $w' = 000u$ for some word $u$. It is clear that $0^ju0^{i+3-j}$ is bordered for all $1\leq j \leq i+2$. Therefore, it suffices to prove that $s 000 t$ is bordered if and only if $s0^{i+3} t$ is bordered where $u = ts$.

First we prove the forward direction. Suppose $s000t$ is bordered. By Lemma~\ref{lemma:one000} we have that $s000t$ contains only one occurrence of $000$ as a factor. So $000$ is neither a prefix of $s00$ nor a suffix of $00t$. Thus, any border of $s000t$ must of length $\leq \min\{|s|,|t|\} +2$. But such a border would also be a border of $s0^{i+3} t$. 

A similar argument works for the reverse direction. Therefore $\nuc(w) = \nuc(w') = \nuc(0^iw')$ for all $i\geq 0$.
\end{proof}

\begin{theorem}
For all $1 < i \leq \mnuc_k (n)$ there exists
$w \in \Sigma_k^n$ such that
$\nuc(w) = i$.
\end{theorem}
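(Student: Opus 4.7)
The plan is to use the preceding lemmas as a zero-padding toolkit: start from a short odd-length base word that already achieves exactly $i$ unbordered conjugates and carries a $000$ in some conjugate, then extend it to arbitrary larger lengths without disturbing the count. Concretely, for $1 < i \leq \mnuc_2(n)$ I would split on whether $n \geq 2i+1$.

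In the main case $n \geq 2i+1$, since $i \geq 2$ we have $2i+1 \geq 5 > 4$, so Lemma~\ref{lemma:oddOverlap} supplies a binary word $v$ of length $2i+1$ with $\nuc(v) = \lfloor(2i+1)/2\rfloor = i$ and with some conjugate $v'$ of $v$ containing $000$. By cyclically shifting I may assume that $v'$ has $000$ as a prefix (this does not change $\nuc$). Set $u = 0^{n-(2i+1)} v'$; then $|u| = n$, and Lemma~\ref{lemma:add0s}, applied with padding parameter $n-(2i+1) \geq 0$, yields $\nuc(u) = \nuc(v') = \nuc(v) = i$, as desired.

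In the complementary case $n < 2i+1$, the hypothesis $i \leq \mnuc_2(n) \leq \lfloor n/2 \rfloor$ forces $2i \leq n < 2i+1$, hence $n = 2i$ is even and $i = n/2 = \mnuc_2(n)$. The required word exists by the very definition of $\mnuc_2(n)$; concretely, Harju and Nowotka's classification (reproved via Theorem~\ref{theorem:maxUn}) furnishes a length-$n$ factor of the Thue-Morse word with $n/2$ unbordered conjugates for $n \in \{2^k, 3\cdot 2^k\}$. This subcase also covers the tiny lengths $n \leq 4$, for which $i$ can only equal $2 = \mnuc_2(n)$.

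The main obstacle is not conceptual but hygienic: one must check that each invoked lemma applies in the right regime. The base length $2i+1 \geq 5$ is required for Lemma~\ref{lemma:oddOverlap}; the starting word must realize the \emph{maximum} $\lfloor \cdot /2 \rfloor$ unbordered conjugates so that Lemma~\ref{lemma:add0s} (which in turn relies on Lemma~\ref{lemma:one000}) engages, and this is automatic by construction; and the initial cyclic rotation that puts $000$ at the front is harmless because $\nuc$ is a conjugacy invariant. Once these compatibilities are noted, the padding telescopes and the existence statement follows for every admissible pair $(n,i)$.
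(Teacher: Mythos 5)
Your argument for the binary case is essentially the paper's own proof: invoke Lemma~\ref{lemma:oddOverlap} to get an odd-length base word of length $2i+1$ realizing $i=\lfloor(2i+1)/2\rfloor$ unbordered conjugates with a conjugate containing $000$, rotate so $000$ is a prefix, and pad with $0$'s via Lemma~\ref{lemma:add0s}; your explicit split on $n\geq 2i+1$ versus $n<2i+1$ just makes precise what the paper leaves implicit (the case $i=\mnuc_2(n)$ is handled by the definition of the maximum, or by Theorem~\ref{theorem:maxUn}). Two small points. First, the theorem is stated for all alphabet sizes $k$, and your proof silently restricts to $k=2$; the paper dispatches $k=3$ and $k\geq 4$ by citing Harju and Nowotka, and you should at least say the same. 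Second, your inference ``$i\leq\mnuc_2(n)\leq\lfloor n/2\rfloor$'' fails for $n=2,3$ (where $\mnuc_2(n)=2>\lfloor n/2\rfloor$), but since you treat the tiny lengths separately and the conclusion there is trivial ($i=\mnuc_2(n)$ is attained by definition of the maximum), this does not damage the argument.
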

\begin{proof}
Let $C=\{5,7,9,10,14,17\}$. For $k\geq 4$, Harju and Nowotka \cite{Harju&Nowotka:2008} showed that for all integers $i$ with $1 < i \leq n$ there exists a word $w\in \Sigma_k^n$ such that $\nuc(w) = i$. For $k=3$, Harju and Nowotka \cite{Harju&Nowotka:2008} showed that if $n\not\in C$ then for all integers $i$ with $1 < i \leq n$ there exists a word $w\in \Sigma_k^n$ such that $\nuc(w) = i$, and if $n\in C$ then for all integers $i$ with $1 < i < n$ there exists a word $w\in \Sigma_k^n$ such that $\nuc(w) = i$.

To the best of the authors' knowledge, there is no known proof of the existence of such words for $k=2$. Suppose $k=2$. By Theorem~\ref{theorem:maxUn} there exists a $w\in \Sigma_2^n$ such that $w$ is a factor of the Thue-Morse word and $\mnuc_2(n) = \nuc(w)$. So assume $i< \mnuc_2(n)$. By Lemma~\ref{lemma:oddOverlap} there exists a binary word $u$ of odd length $m$ such that $\nuc(u) = i=\lfloor m/2\rfloor$ and $000$ is a factor of some conjugate of $u$. Let $u'$ be the conjugate of $u$ such that $000$ is a prefix of $u'$. Lemma~\ref{lemma:add0s} tells us $\nuc(u)=\nuc(u') = \nuc(0^{n-m}u')$. Since $\nuc(0^{n-m}u') = i$ and $|0^{n-m} u'|=n $, we have that for all $1<i\leq \mnuc_2(n)$, there exists a $w\in \Sigma_2^n$ such that $\nuc(w) = i$.
\end{proof}

\section{Expected number of unbordered conjugates}

What is the expected number of unbordered conjugates of 
a randomly-chosen word of length $n$?

Let
$u_k(n)$ denote the number of unbordered binary words of length $n$.
It is known that $\lim_{n \rightarrow \infty} u_k(n)/k^n = \alpha_k$,
where $\alpha_k$ is a certain real number.  We have $\alpha_2 \doteq
0.2677$; see \cite{Nielsen:1973}.

\begin{theorem}
The expected number of unbordered conjugates of a randomly-chosen word of length $n$ over a $k$-letter alphabet is 
$(u_k (n) /k^n) n$, which is asymptotically equal to $\alpha_k n$. 
\end{theorem}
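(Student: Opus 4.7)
My plan is to use linearity of expectation. For a word $w$ of length $n$, write
\[ \nuc(w) = \sum_{i=0}^{n-1} \mathbf{1}[\sigma^i(w) \text{ is unbordered}]. \]
Taking expectations over $w$ chosen uniformly at random from $\Sigma_k^n$, I get
\[ E[\nuc(w)] = \sum_{i=0}^{n-1} \Pr[\sigma^i(w) \text{ is unbordered}]. \]

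The key observation is that for each fixed $i$, the map $w \mapsto \sigma^i(w)$ is a bijection on $\Sigma_k^n$, and hence $\sigma^i(w)$ is itself uniformly distributed on $\Sigma_k^n$. Therefore
\[ \Pr[\sigma^i(w) \text{ is unbordered}] \;=\; \frac{u_k(n)}{k^n}, \]
independently of $i$. Summing over the $n$ values of $i$ gives the exact formula $E[\nuc(w)] = n \cdot u_k(n)/k^n$.

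The asymptotic estimate then follows immediately from the cited fact that $u_k(n)/k^n \to \alpha_k$: multiplying through by $n$ yields $E[\nuc(w)] \sim \alpha_k n$. There is no real obstacle here, since the nontrivial content (the existence of the limit $\alpha_k$ and its numerical value $\alpha_2 \doteq 0.2677$) has already been supplied by Nielsen's result, and the combinatorial step reduces to the bijectivity of the cyclic shift. The only point worth stating explicitly in the write-up is that uniformity is preserved under $\sigma^i$, which is what makes every term in the sum equal.
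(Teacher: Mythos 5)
Your proof is correct, and it is organized somewhat more cleanly than the paper's: you apply linearity of expectation directly over all of $\Sigma_k^n$ and use the fact that each $\sigma^i$ is a bijection preserving the uniform distribution, whereas the paper first discards the non-primitive words (each of which has every conjugate bordered, hence contributes $0$) and then reindexes the double sum over primitive words, using the fact that among all cyclic shifts of all primitive length-$n$ words each primitive word occurs exactly $n$ times. The one step you should not gloss over is your opening identity $\nuc(w) = \sum_{i=0}^{n-1} \mathbf{1}[\sigma^i(w) \text{ is unbordered}]$: the paper defines $\nuc(w)$ as the number of unbordered \emph{conjugates}, i.e., distinct words in the conjugacy class, and when $w$ is not primitive its $n$ cyclic shifts are not all distinct, so in principle the right-hand side could overcount. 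The identity still holds because every conjugate of a non-primitive word is bordered (so both sides vanish), and for primitive words all $n$ shifts are distinct --- but this is exactly the primitive/non-primitive case analysis that the paper carries out explicitly, so it deserves a sentence in your write-up. With that sentence added, your argument is complete; the asymptotic statement follows from Nielsen's result exactly as you say. What your version buys is brevity and a transparent probabilistic reading; what the paper's version buys is that the role of primitivity is made explicit rather than hidden inside the indicator-sum identity.
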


\begin{proof}
To see this, let $x$ be a word of length $n$.  There are
two cases:  $x$ is not primitive, and $x$ is primitive.

If $x$ is not primitive, it equals $y^e$ for some $e \geq 2$.
Then by a well-known result, every conjugate of $x$
is bordered.   So $x$ has $0$ unbordered conjugates.

If $x$ is primitive, then all its cyclic shifts are distinct.
(For otherwise $x = uv = vu$, and then by Lyndon-Sch\"utzenberger
we know $x$ is a power.)  There are $n$ of these cyclic shifts.
So if we consider all conjugates of all primitive words, each
primitive word appears exactly $n$ times.

Putting this all together, we get
\begin{align*}
\sum_{i \in S} i \cdot \Pr[X=i] &= {1 \over {k^n}} \sum_{x \in \sigma^n}
\nuc(x) \\
&= {1 \over {k^n}} \sum_{{{x \in \Sigma^n}} \atop {x \text{ primitive}}}
\nuc(x) \\
&= {1 \over {k^n}} \sum_{{{x \in \Sigma^n}} \atop {x \text{ primitive}}}
	\sum_{0 \leq i < n} [ \sigma^i (x) \text{ is unbordered } ] \\
&= {1 \over {k^n}} \cdot n \cdot \sum_{{{x \in \Sigma^n}} \atop {x \text{ primitive}}}
         [x \text{ is unbordered } ] \\
&= {1 \over {k^n}} n \cdot u_k (n), 
\end{align*}
as claimed.   Here $\nuc(x)$ is the number of unbordered conjugates
of $x$, and $\sigma^i(x)$ denotes the left shift by $i$ locations of
$x$, and $[ p ]$ is $1$ if $p$ is true and $0$ otherwise.
\end{proof}

\section{Conclusions}
We want to emphasize that our experience shows that rephrasing problems in combinatorics on words using the first-order logical theory of automatic sequences can be a useful tool in solving these problems.
We encourage others to adopt this approach.

\section*{Acknowledgments}

We thank Dirk Nowotka for helpful discussions.

\newcommand{\noopsort}[1]{} \newcommand{\singleletter}[1]{#1}


\begin{thebibliography}{10}

\bibitem{Allouche&Shallit:1999}
J.-P. Allouche and J.~O. Shallit.
\newblock The ubiquitous {Prouhet-Thue-Morse} sequence.
\newblock In C.~Ding, T.~Helleseth, and H.~Niederreiter, editors, {\em
  Sequences and Their Applications, Proceedings of SETA '98}, pp.  1--16.
  Springer-Verlag, 1999.

\bibitem{Allouche&Shallit:2003}
J.-P. Allouche and J.~Shallit.
\newblock {\em Automatic Sequences: Theory, Applications, Generalizations}.
\newblock Cambridge University Press, 2003.

\bibitem{Berstel:1995}
J.~Berstel.
\newblock {\em Axel {Thue's} Papers on Repetitions in Words: a Translation}.
\newblock Number~20 in Publications du Laboratoire de Combinatoire et
  d'Informatique {Math\'ematique}. Universit\'e du Qu\'ebec \`a Montr\'eal,
  February 1995.

\bibitem{Bruyere&Hansel&Michaux&Villemaire:1994}
V.~{Bruy\`ere}, G.~Hansel, C.~Michaux, and R.~Villemaire.
\newblock Logic and $p$-recognizable sets of integers.
\newblock {\em Bull. Belgian Math. Soc.} {\bf 1} (1994), 191--238.
\newblock Corrigendum, {\it Bull.\ Belg.\ Math.\ Soc.} {\bf 1} (1994), 577.

\bibitem{Currie:2002b}
J.~Currie.
\newblock There are ternary circular square-free words of length $n$ for $n
  \geq 18$.
\newblock {\em Electronic J. Combinatorics} {\bf 9} (2002), \#N10 (electronic).

\bibitem{Harju&Nowotka:2004}
T.~Harju and D.~Nowotka.
\newblock Border correlation of binary words.
\newblock {\em J. Combin. Theory Ser. A} {\bf 108} (2004), 331--341.

\bibitem{Harju&Nowotka:2008}
T.~Harju and D.~Nowotka.
\newblock Bordered conjugates of words over large alphabets.
\newblock {\em Electronic J. Combinatorics} {\bf 15} (2008), \#N41
  (electronic),
  \verb!https://www.combinatorics.org/ojs/index.php/eljc/article/view/v15i1n41!

\bibitem{Lyndon&Schutzenberger:1962}
R.~C. Lyndon and M.~P. {Sch\"utzenberger}.
\newblock The equation {$a^M = b^N c^P$} in a free group.
\newblock {\em Michigan Math. J.} {\bf 9} (1962), 289--298.

\bibitem{Mousavi:2016}
H.~Mousavi.
\newblock Automatic theorem proving in {{\tt Walnut}}.
\newblock Arxiv preprint, available at \url{https://arxiv.org/abs/1603.06017}.
  Software available at \url{https://github.com/hamousavi/Walnut}, 2016.

\bibitem{Nielsen:1973}
P.~T. Nielsen.
\newblock A note on bifix-free sequences.
\newblock {\em IEEE Trans. Inform. Theory} {\bf IT-19} (1973), 704--706.

\bibitem{Shallit&Zarifi:2019}
J.~Shallit and R.~Zarifi.
\newblock Circular critical exponents for {Thue-Morse} factors.
\newblock {\em RAIRO Inform. Th\'eor. App.} {\bf 53} (2019), 37--49.

\bibitem{Thue:1906}
A.~Thue.
\newblock {\"Uber} unendliche {Zeichenreihen}.
\newblock {\em Norske vid. Selsk. Skr. Mat. Nat. Kl.} {\bf 7} (1906), 1--22.
\newblock Reprinted in {\it Selected Mathematical Papers of Axel Thue}, T.
  Nagell, editor, Universitetsforlaget, Oslo, 1977, pp.~139--158.

\bibitem{Thue:1912}
A.~Thue.
\newblock {\"Uber} die gegenseitige {Lage} gleicher {Teile} gewisser
  {Zeichenreihen}.
\newblock {\em Norske vid. Selsk. Skr. Mat. Nat. Kl.} {\bf 1} (1912), 1--67.
\newblock Reprinted in {\it Selected Mathematical Papers of Axel Thue}, T.
  Nagell, editor, Universitetsforlaget, Oslo, 1977, pp.~413--478.

\end{thebibliography}
\end{document}